\let\MYcaption\@makecaption
\let\@makecaption\MYcaption
\def\BibTeX{{\rm B\kern-.05em{\sc i\kern-.025em b}\kern-.08em
T\kern-.1667em\lower.7ex\hbox{E}\kern-.125emX}}
\theoremstyle{plain}
\newtheorem{theorem}{Theorem}
\newtheorem{lemma}{Lemma}
\newtheorem{corollary}{Corollary}
\theoremstyle{definition}
\newtheorem{example}{Example}
\def\({\left(}
\def\){\right)}
\def\[{\left[}
\def\]{\right]}
\def\ubf{{\bf u}}
\def\xbf{{\bf x}}
\def\Phibf{{\bf \Phi}}
\def\Acal{\mathcal{A}}  
\def\Hcal{\mathcal{H}}  
\def\Rcal{\mathcal{R}}  
\def\Scal{\mathcal{S}}  
\def\w#1{\widetilde{#1}}
\def\v#1{\overrightarrow{#1}}
\def\vinv#1{\overleftarrow{#1}}
\def\nul#1{\text{null}\left(#1\right)}
\DeclareMathOperator*{\argmin}{argmin}
\def\alg#1{Algorithm~\ref{alg:#1}}
\def\cor#1{Corollary~\ref{cor:#1}}
\def\fig#1{Fig.~\ref{fig:#1}}
\def\lem#1{Lemma~\ref{lem:#1}}
\def\sec#1{Section~\ref{sec:#1}}
\def\tab#1{Table~\ref{tab:#1}}
\def\thm#1{Theorem~\ref{thm:#1}}
\def\eqn#1{\eqref{eqn:#1}}
\def\mat#1{\begin{bmatrix}#1\end{bmatrix}}
\def\st{{\rm s.t.}}
\def\OptConsSep{&&\quad}
\newcommand{\norm}[1]{\left\lVert#1\right\rVert}
\newcommand{\OptMin}{\@ifstar\OptMinStar\OptMinNoStar}
\newcommand\OptCons[3]{
&\ #1
\ifx\\#2\\ \else \OptConsSep #2 \fi%
\ifx\\#3\\ \nonumber \else \label{eqn:#3} \fi%
}
\newcommand{\OptMinStar}[3][]{%
\ifx\\#1\\ \else\begin{subequations}\label{eqn:#1}\fi%
\begin{alignat}{2}%
\min\ &\ #2 \nonumber \\%
\st\ #3 %
\end{alignat}%
\ifx\\#1\\ \else\end{subequations}\fi%
}
\newcommand{\OptMinNoStar}[3][]{%
\ifx\\#1\\ \else\begin{subequations}\label{eqn:#1}\fi%
\begin{alignat}{2}%
\min\ &\ #2 \ifx\\#1\\ \nonumber \else \tag{\ref{eqn:#1}}\label{eqnset:#1} \fi \\%
\st\ #3 %
\end{alignat}%
\ifx\\#1\\ \else\end{subequations}\fi%
}
\newcommand{\centeritem}[3]{
\hspace*{#1}#2\hspace*{#3}

}
\newcommand{\centerlegend}[3]{\centeritem{#1}{\includegraphics{#2}}{#3}}
\title{\LARGE \bf System Level Synthesis via Dynamic Programming}
\author{Shih-Hao Tseng, Carmen Amo Alonso, and SooJean Han
\thanks{Shih-Hao Tseng, Carmen Amo Alonso, and SooJean Han are with the Division of Engineering and Applied Science, California Institute of Technology, Pasadena, CA 91125, USA.  Emails: {\tt\small \{shtseng,camoalon,soojean\}@caltech.edu}}
}
\begin{document}

\maketitle
\thispagestyle{empty}
\pagestyle{empty}

\bstctlcite{IEEE_BSTcontrol}

\begin{abstract}

System Level Synthesis (SLS) parametrization facilitates controller synthesis for large, complex, and distributed systems by incorporating system level constraints (SLCs) into a convex SLS problem and mapping its solution to stable controller design. Solving the SLS problem at scale efficiently is challenging, and current attempts take advantage of special system or controller structures to speed up the computation in parallel. However, those methods do not generalize as they rely on the specific system/controller properties.

We argue that it is possible to solve general SLS problems more efficiently by exploiting the structure of SLS constraints. In particular, we derive dynamic programming (DP) algorithms to solve SLS problems. In addition to the plain SLS without any SLCs, we extend DP to tackle infinite horizon SLS approximation and entrywise linear constraints, which form a superclass of the locality constraints. Comparing to convex program solver and naive analytical derivation, DP solves SLS $4$ to $12\times$ faster and scales with little computation overhead. We also quantize the cost of synthesizing a controller that stabilizes the system in a finite horizon through simulations.

\end{abstract}
\section{Introduction}\label{sec:introduction}
System Level Synthesis (SLS) facilitates the incorporation of system level constraints (SLCs) by its parametrization of closed-loop system responses
\cite{anderson2019system,wang2019system}. Subsequently, SLS greatly simplifies the controller synthesis problem of large-scale, complex, distributed systems into a convex program. Given matrices $A$ and $B$ depending on the system dynamic, SLS formulates the following convex program
\OptMin[SLS]{
g(\Phibf_\xbf,\Phibf_\ubf)
}{
\OptCons{
\mat{zI-A & -B}
\mat{{\Phibf_\xbf}\\
{\Phibf_\ubf}
}
=
I,
}{}{sf-constraints}\\
\OptCons{
{\Phibf_\xbf},{\Phibf_\ubf} \in z^{-1}\Rcal\Hcal_{\infty},
}{}{}\\
\OptCons{
\mat{{\Phibf_\xbf}\\
{\Phibf_\ubf}} \in \Scal,
}{}{system-level-constraints}
}
where $g$ is the objective, $\Scal$ the set of SLCs, and $z^{-1}\Rcal\Hcal_{\infty}$ the set of strictly proper stable transfer functions. The system responses $\{ \Phibf_\xbf, \Phibf_\ubf \}$ are transfer functions given by
\begin{align*}
\Phibf_\xbf = \sum_{\tau = 1}^{T} z^{-\tau} \Phi_x[\tau], \quad\quad \Phibf_\ubf = \sum_{\tau = 1}^{T} z^{-\tau} \Phi_u[\tau]
\end{align*}
with horizon $T$ and spectral components $\Phi_x[\tau]$ and $\Phi_u[\tau]$.
In what follows, we assume that the objective $g$ can be decomposed into a sum of per-step costs
\begin{align*}
g(\Phibf_\xbf,\Phibf_\ubf) = \sum\limits_{\tau = 1}^{T} g_{\tau}(\Phi_x[\tau],\Phi_u[\tau]).
\end{align*}
Once \eqn{SLS} is solved, SLS gives a state-feedback controller $\Phibf_\xbf\Phibf_\ubf^{-1}$ that can stabilize the systems in horizon $T$.

Although SLS transforms a complex controller synthesis problem into a tractable convex program, in practice, solving \eqn{SLS} is still computationally demanding, especially when dealing with large-scale systems. To accelerate the solving process, existing proposals, such as \cite{anderson2017structured,matni2017scalable,wang2018separable,alonso2019distributed}, focus on specially structured systems and controllers (e.g., localizable systems) which allow the decomposition of \eqn{SLS} for parallel processing. However, for general systems without the desired structures, those methods are no longer applicable, and we resort to the heuristics or programming techniques in convex program solvers to speed up the solving process.

We then ask the question: \emph{Is it possible to expedite the solving process without special structural assumptions?} Our answer is affirmative. So far the existing proposals essentially impose various structural constraints through \eqn{system-level-constraints}, and we have not yet fully exploited the structure of the SLS constraint \eqn{sf-constraints}. By treating the system responses as state and control of a system, one can solve the SLS problem by dynamic programming (DP) \cite{bertsekas2005dynamic1}, which is highly efficient. This opportunity is also noticed in \cite{anderson2017structured}, where the authors applied the DP principle for SLS with a quadratic cost. But to do so, \cite{anderson2017structured} requires additional input coupling assumption: The boundary states must be directly controlled by the corresponding boundary actuators. A fully general DP algorithm remains open.

\subsection{Contributions and Organization}
In \sec{DP}, we derive the DP algorithms to solve the SLS problem \eqn{SLS}. Starting with plain SLS without SLCs \eqn{system-level-constraints}, we then provide the approximation to infinite horizon SLS and incorporate a class of SLCs, the entrywise linear constraints, into the DP process. The entrywise linear constraints are generalized versions of the sparsity/locality constraints in the literature \cite{anderson2019system}. To demonstrate the usage of the DP algorithms, we adopt them for the SLS problems with a $\Hcal_2$ objective in \sec{H2}. Through extensive simulations in \sec{evaluation}, we show that DP algorithms are more scalable and can outperform existing convex program solver by $4$ to $12\times$ and naive Lagrange multiplier method by $10$ to $38\times$. We also quantify the synthesis overhead for stabilizing a system in a finite horizon by comparing DP with DP approximation to infinite horizon SLS. Finally, we conclude in \sec{conclusion} with future research directions.

\subsection{Notation}
We use lower- and upper-case letters (such as $x$ and $A$) to denote vectors and matrices respectively, while bold lower- and upper-case characters and symbols (such as $\ubf$ and ${\Phibf_\ubf}$) are reserved for signals and transfer matrices. Let $A^{ij}$ be the entry of $A$ at the $i^{\rm th}$ row and $j^{\rm th}$ column. We denote by $A^+$ the pseudo inverse (Moore-Penrose inverse).
We vectorize a matrix $A$ to be the vector $\v{A}$ by stacking its columns. Inversely, we rebuild the matrix $\vinv{x}$ from a vector $x$ by realigning the elements. The null space of a matrix $\Psi$ is written as $\nul{\Psi} = \{v : \Psi v = 0\}$, where $0$ is an all-zero vector. We slightly abuse the notation to write $A \in \nul{\Psi}$ if all columns in $A$ are in $\nul{\Psi}$.
Let $\left\lVert {\Phibf}_\ubf\right\rVert_{\Hcal_2}^2$ be the $\Hcal_2$ norm of a transfer function $\Phibf_\ubf$, which is given by $\sum_{t=0}^{\infty}\left\lVert{\Phi_u[t]}\right\rVert_{F}^2$ with $\lVert \cdot \rVert_{F}$ the Frobenius norm.

\section{Dynamic Programming Algorithms}\label{sec:DP}
We illustrate the dynamic programming (DP) algorithms for state-feedback SLS problems.
We begin with DP for plain SLS and reduce it as an approximation to infinite horizon SLS problems. We then extend the DP algorithm to handle entrywise linear constraints.

Before our DP derivations, we introduce two lemmas here to use later. The proofs are trivial and omitted.
\begin{lemma}\label{lem:null-space-is-subspace}
Given a matrix $\Psi$, $\nul{\Psi}$ is a subspace and there exists some matrix $\Xi$ such that
$
\nul{\Psi} = \{ v : v = \Xi \theta \}
$
where $\theta$ is an arbitrary vector.
\end{lemma}
\begin{lemma}\label{lem:union-of-null-space}
The intersection of $\nul{\Psi_a}$ and $\nul{\Psi_b}$ is $\nul{\mat{\Psi_a\\ \Psi_b}}$.
\end{lemma}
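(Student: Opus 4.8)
The plan is to prove the set equality by a direct chain of equivalences, exploiting the block structure of the stacked matrix rather than arguing the two inclusions separately. First I would observe that for any vector $v$ of the appropriate dimension, the product $\mat{\Psi_a\\ \Psi_b} v$ partitions blockwise into $\mat{\Psi_a v\\ \Psi_b v}$. Since the defining condition of $\nul{\cdot}$ equates its argument to the all-zero vector, and a stacked vector vanishes if and only if each of its blocks vanishes, the single condition $\mat{\Psi_a\\ \Psi_b} v = 0$ is equivalent to the conjunction $\Psi_a v = 0$ and $\Psi_b v = 0$.

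Reading off the definition $\nul{\Psi} = \{v : \Psi v = 0\}$, this conjunction says precisely that $v \in \nul{\Psi_a}$ and $v \in \nul{\Psi_b}$, i.e.\ $v \in \nul{\Psi_a} \cap \nul{\Psi_b}$. Chaining the equivalences then yields $v \in \nul{\mat{\Psi_a\\ \Psi_b}}$ if and only if $v \in \nul{\Psi_a} \cap \nul{\Psi_b}$, which is exactly the claimed equality. Because every step is an if-and-only-if, the equality of the two sets follows immediately without a separate double-inclusion argument.

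There is essentially no obstacle here, consistent with the paper's remark that the proof is trivial. The only point deserving a word of care is that $\Psi_a$ and $\Psi_b$ must share the same number of columns, so that the stacking $\mat{\Psi_a\\ \Psi_b}$ is well defined and acts on a common vector $v$; this is implicit in the statement, since both null spaces are taken over vectors of the same dimension. The same blockwise reasoning extends, by induction, to the intersection of the null spaces of any finite collection of matrices stacked vertically.
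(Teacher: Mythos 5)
Your argument is correct and is exactly the standard blockwise reasoning the paper has in mind; the paper itself omits the proof as trivial, and your chain of equivalences ($\mat{\Psi_a\\ \Psi_b} v = 0$ iff $\Psi_a v = 0$ and $\Psi_b v = 0$) fills it in cleanly, including the appropriate remark on column-dimension compatibility.
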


\subsection{Plain SLS}\label{sec:DP-plain-SLS}
We first derive the DP algorithm for plain SLS without constraint \eqn{system-level-constraints}. Notice that we can rewrite the SLS constraints \eqn{sf-constraints} in the following form with spectral components:\\[-\baselineskip]
\begin{subequations}\label{eqn:sf-space-time-constraints}
\begin{alignat}{2}
\OptCons{
\Phi_x[\tau+1] = A \Phi_x[\tau] + B \Phi_u[\tau],
}{}{}\\
\OptCons{\hspace*{20ex} \forall \tau = 1, \dots, T-1,}{}{sf-state-dynamic}\\
\OptCons{
\Phi_x[1] = I,
}{}{sf-initial-condition}\\
\OptCons{
A \Phi_x[T] + B \Phi_u[T] = 0.
}{}{sf-boundary-condition}
\end{alignat}
\end{subequations}

Treating $\Phi_x[\tau]$ as state variable $X[\tau]$ and $\Phi_u[\tau]$ as control $U[\tau]$ at each time $\tau$, the SLS problem is equivalent to the following discrete time control problem with state dynamics \eqn{sf-state-dynamic}, initial condition \eqn{sf-initial-condition}, and boundary condition \eqn{sf-boundary-condition}:
\OptMin[equivalent-SLS]{
\sum\limits_{\tau = 1}^{T} g_{\tau}(X[\tau],U[\tau])
}{
\OptCons{
X[\tau + 1] = A X[\tau] + B U[\tau]
,
}{}{equivalent-state-dynamic}\\
\OptCons{\hspace*{20ex} \forall \tau = 1, \dots, T-1,}{}{}\\
\OptCons{
X[1] = I,
}{}{equivalent-initial-condition}\\
\OptCons{
A X[T] + B U[T] = 0.
}{}{equivalent-boundary-condition}
}

To solve the above problem by DP, we have to address the following issues:
\begin{itemize}
\item Compute the cost-to-go function $V_\tau(X[\tau])$ recursively backwards in time according to the state dynamic~\eqn{equivalent-state-dynamic}.
\item Ensure the boundary condition~\eqn{equivalent-boundary-condition} can be satisfied at each step of the backward recursion.
\end{itemize}

The form of the cost-to-go function $V_\tau$ depends on the objective $g$; we will explicitly derive $V_\tau$ for the specific $\Hcal_2$ objective in \sec{H2}. For now, we derive $V_\tau(X[\tau])$ implicitly by definition via the following recursive relationship:
\begin{alignat}{2}
V_{\tau}(X[\tau])
=&\ \min\limits_{\hat{U} \in \Acal_U[\tau]} g_{\tau}(X[\tau],\hat{U}) +&& V_{\tau + 1}(AX[\tau]+B\hat{U}), \nonumber \\
&&&\ \forall \tau = 1,\dots,T
\label{eqn:cost-to-go}
\end{alignat}
where $\Acal_U[\tau]$ denotes the admissible set of $U$ at time $\tau$ and $V_{T+1}(X[\tau]) = 0$.

Accordingly, the control $U[\tau]$ at each time $\tau$ is given by
\begin{align}
U[\tau] =&\ \argmin\limits_{\hat{U} \in \Acal_U[\tau]} g_{\tau}(X[\tau],\hat{U}) + V_{\tau + 1}(AX[\tau]+B\hat{U}) \nonumber \\
=&\ K_\tau (X[\tau]),
\label{eqn:K-function}
\end{align}
and the state $X[\tau]$ follows~\eqn{equivalent-state-dynamic} and~\eqn{equivalent-initial-condition}.

We assign $X[T+1]$ to be an all-zero matrix, so that $X[T+1] \in \nul{I} = \nul{\Psi_x[T+1]}$ and \eqn{equivalent-boundary-condition} is met in the form of \eqn{equivalent-state-dynamic}. Given $X[\tau+1] \in \nul{\Psi_x[\tau + 1]}$, the following theorem then constructs $\Acal_U[\tau]$ and asserts that $X[\tau] \in \nul{\Psi_x[\tau]}$ to satisfy the boundary condition~\eqn{equivalent-boundary-condition} at the end of backward recursion.
\begin{theorem}\label{thm:feasible_set}
Suppose $X[\tau+1] \in \nul{\Psi_x[\tau+1]}$, we have
\begin{align*}
\Acal_U[\tau] = \{ \hat{U} : \hat{U} = Q_X X[\tau] + Q_\Lambda \Lambda \}
\end{align*}
where
\begin{align*}
\Gamma[\tau] =&\ \mat{-B & \Xi_x[\tau + 1]}, \\
Q_X =&\ \mat{I & 0} \Gamma^+[\tau] A, \\
Q_\Lambda =&\ \mat{I & 0} (I - \Gamma^+[\tau] \Gamma[\tau]),
\end{align*}
and $\Xi_x[\tau+1]$ is given by \lem{null-space-is-subspace} for $\nul{\Psi_x[\tau+1]}$.

Also, $X[\tau] \in \nul{\Psi_x[\tau]}$ where
\begin{align*}
\Psi_x[\tau] = (\Gamma[\tau] \Gamma^+[\tau] - I)A.
\end{align*}
\end{theorem}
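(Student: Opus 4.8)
The plan is to characterize $\Acal_U[\tau]$ directly from the requirement that applying $\hat U$ keeps the successor state feasible, and then to read off the induced constraint on $X[\tau]$ from the consistency condition of the resulting linear system. First I would encode feasibility of the next state: the control $\hat U$ is admissible exactly when the successor $AX[\tau] + B\hat U$ lies in $\nul{\Psi_x[\tau+1]}$. By \lem{null-space-is-subspace}, this is equivalent to the existence of some $\theta$ with $AX[\tau] + B\hat U = \Xi_x[\tau+1]\theta$.

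Next I would collect the two unknowns $\hat U$ and $\theta$ into a single stacked variable and rewrite the feasibility equation as the linear system
\begin{align*}
\Gamma[\tau] \mat{\hat U \\ \theta} = A X[\tau], \qquad \Gamma[\tau] = \mat{-B & \Xi_x[\tau+1]},
\end{align*}
so that admissibility of $\hat U$ is precisely solvability of this system for some $\theta$. This reframing is the crux: it turns a feasibility condition about the next step into a standard matrix linear equation whose full solution set I can write explicitly.

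Then I would apply the Moore--Penrose solution theory to $\Gamma[\tau] y = AX[\tau]$. The system is consistent if and only if $\Gamma[\tau]\Gamma^+[\tau] (AX[\tau]) = AX[\tau]$, i.e.\ $(\Gamma[\tau]\Gamma^+[\tau] - I)A X[\tau] = 0$; since this holds column by column, it is exactly $X[\tau] \in \nul{\Psi_x[\tau]}$ with $\Psi_x[\tau] = (\Gamma[\tau]\Gamma^+[\tau] - I)A$, which gives the asserted null-space membership (equivalently, $\nul{\Psi_x[\tau]}$ is the set of states from which $\Acal_U[\tau]$ is nonempty). When consistent, the complete solution is $\mat{\hat U\\ \theta} = \Gamma^+[\tau] A X[\tau] + (I - \Gamma^+[\tau]\Gamma[\tau])\Lambda$ over arbitrary $\Lambda$, where $I - \Gamma^+[\tau]\Gamma[\tau]$ projects onto $\nul{\Gamma[\tau]}$. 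Left-multiplying by $\mat{I & 0}$ to extract the $\hat U$ block yields $\hat U = Q_X X[\tau] + Q_\Lambda \Lambda$ with the stated $Q_X$ and $Q_\Lambda$, completing the characterization of $\Acal_U[\tau]$.

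The main obstacle I anticipate is justifying that the standard vector-valued pseudoinverse facts — the consistency test $\Gamma\Gamma^+ b = b$ and the affine parametrization $\Gamma^+ b + (I-\Gamma^+\Gamma)\Lambda$ of all solutions — carry over verbatim to the matrix equation $\Gamma[\tau] y = A X[\tau]$, where $y$, $X[\tau]$, and $\Lambda$ are matrices rather than vectors. This follows by treating each column independently (or by vectorizing via $\v{\cdot}$), but I would phrase it carefully so that the column-wise membership $X[\tau]\in\nul{\Psi_x[\tau]}$ aligns with the paper's convention that a matrix lies in a null space precisely when all of its columns do.
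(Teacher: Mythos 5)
Your proposal is correct and follows essentially the same route as the paper: encode admissibility of $\hat U$ as solvability of $\Gamma[\tau]\mat{\hat U\\ \theta} = AX[\tau]$ via \lem{null-space-is-subspace}, then invoke the Moore--Penrose consistency condition and general-solution formula to obtain both the parametrization of $\Acal_U[\tau]$ and the null-space condition on $X[\tau]$. Your closing remark about treating the matrix equation column by column is a sound way to handle the point the paper delegates to its citation of the generalized-inverse literature.
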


\begin{proof}
Since $X[\tau+1] \in \nul{\Psi_x[\tau+1]}$, by \lem{null-space-is-subspace}, there exists a matrix $\Xi_x[\tau]$ such that we can express all columns in $X[\tau+1]$ as $\Xi_x[\tau] \theta$ for some vector $\theta$. In other words, there exists a matrix $\Theta$ such that
\begin{align*}
\Xi_x[\tau + 1]\Theta = X[\tau + 1] = AX[\tau] + BU[\tau].
\end{align*}
Rearranging the terms and applying the definition of $\Gamma[\tau]$ yield
\begin{align*}
\mat{-B & \Xi_x[\tau + 1]}\mat{U[\tau] \\ X[\tau + 1]} = \Gamma[\tau] \mat{U[\tau] \\ X[\tau + 1]} = AX[\tau].
\end{align*}
Therefore, we must have
\begin{align}
\mat{U[\tau]\\\Theta} = \Gamma^+[\tau] A X[\tau] + (I - \Gamma^+[\tau] \Gamma[\tau]) \Lambda,
\label{eqn:feasible-U-and-chi}
\end{align}
for some matrix $\Lambda$, or
\begin{align*}
U[\tau] \in \{ \hat{U} : \hat{U} = Q_X X[\tau] + Q_\Lambda \Lambda \} = \Acal_U[\tau].
\end{align*}
To ensure that \eqn{feasible-U-and-chi} is solvable, \cite[Theorem 1]{james1978generalised} asserts that
\begin{align*}
\Gamma[\tau] \Gamma^+[\tau] AX[\tau] = AX[\tau].
\end{align*}
As such, we have
\begin{align*}
X[\tau] \in \nul{(\Gamma[\tau] \Gamma^+[\tau] - I)A} = \nul{\Psi_x[\tau]}.
\end{align*}
\\[-2\baselineskip]
\end{proof}

Together, we summarize the derivation in \alg{DP-plain}.

\begin{algorithm}[!t]
\begin{algorithmic}[1]
\REQUIRE $A, B$ and objective $g(\Phibf_\xbf,\Phibf_\ubf)$.
\ENSURE $\Phi_x[\tau], \Phi_u[\tau]$ for all $\tau = 1, \dots, T$.
\STATE{$\nul{\Psi_x[T+1]} = \nul{I}$.}\label{line:alg-DP-plain-initial-null-Psi}
\STATE{$V_{T+1}(X[\tau]) = 0$.}
\FOR{$\tau = T, \dots, 1$}
\STATE{Derive $\Acal_U[\tau]$ and $\nul{\Psi_x[\tau]}$ from~\thm{feasible_set}.}\label{line:alg-DP-plain-Fcal-U}
\STATE{Compute $K_\tau(X[\tau])$ by \eqn{K-function}.}
\STATE{Derive $V_{\tau}(X[\tau])$ from \eqn{cost-to-go}.}
\ENDFOR
\STATE{$\Phi_x[1] = I$.}
\FOR{$\tau = 1, \dots, T$}
\STATE{$\Phi_u[\tau] = K_\tau(\Phi_x[\tau])$.}
\STATE{$\Phi_x[\tau + 1] = A \Phi_x[\tau] + B\Phi_u[\tau]$.}
\ENDFOR
\end{algorithmic}
\caption{DP for plain SLS.}
\label{alg:DP-plain}
\end{algorithm}

\subsection{Approximation to Infinite Horizon SLS}\label{sec:DP-infinite-horizon-SLS}
For infinite horizon SLS, the constraints~\eqn{sf-constraints} become:
\begin{alignat}{2}
\OptCons{
\Phi_x[\tau+1] = A \Phi_x[\tau] + B \Phi_u[\tau],
}{\forall \tau = 1, \dots, \infty,}{}\\
\OptCons{
\Phi_x[1] = I
}{}{}
\end{alignat}
where the key difference from its finite counterpart~\eqn{sf-space-time-constraints} arises due to the absence of the boundary condition \eqn{sf-boundary-condition}.

A naive approximation to the infinite horizon SLS is to solve \eqn{equivalent-SLS} without the condition \eqn{equivalent-boundary-condition}, which leads to
\OptMin{
\sum\limits_{\tau = 1}^{T} g_{\tau}(X[\tau],U[\tau])
}{
\OptCons{
X[\tau + 1] = A X[\tau] + B U[\tau],
}{\forall \tau = 1, \dots, T-1,}{}\\
\OptCons{
X[1] = I.
}{}{}
}
The above optimization problem can also be solved by dynamic programming. We can obtain the corresponding DP algorithm by relaxing the feasible set $\Acal_U[\tau]$ to be the whole space and removing line \ref{line:alg-DP-plain-initial-null-Psi} and \ref{line:alg-DP-plain-Fcal-U} from \alg{DP-plain}.

\subsection{SLS with Entrywise Linear Constraints}\label{sec:DP-SLC}
We now elaborate on how to incorporate the system level constraints \eqn{system-level-constraints} into DP. Especially, we consider the \emph{entrywise linear constraints} as follows
\begin{align*}
\Scal = \{ \Phibf_\xbf, \Phibf_\ubf : &\ \Phi_x[\tau] \in \Scal_x[\tau],\nonumber\\
&\ \Phi_u[\tau] \in \Scal_u[\tau], \text{ for all } \tau = 1,\dots,T \},
\end{align*}
where
\begin{subequations}\label{eqn:entrywise-linear-constraint}
\begin{align}
\Scal_x[\tau] =&\ \{ \Phi : \v{\Phi} = S_x[\tau] \v{\Phi} \},
\label{eqn:constraints-Sx}\\
\Scal_u[\tau] =&\ \{ \Phi : \v{\Phi} = S_u[\tau] \v{\Phi} \}.
\label{eqn:constraints-Su}
\end{align}
\end{subequations}

The entrywise linear constraint \eqn{entrywise-linear-constraint} allows the entries in each spectral component to depend linearly on one another. It generalizes the sparsity constraints in the literature \cite{anderson2019system}, which confines the non-zero entries of each spectral component.

\begin{example}[Sparsity as Entrywise Linear Constraints]
Let $\Phi_x[\tau]$ be a $2\times 2$ matrix. Consider the sparsity constraint which dictates $\Phi_x^{12}[\tau] = 0$. We can express the sparsity constraint as a entrywise linear constraint by enforcing a binary diagonal $S_x[\tau]$ with $0$ at the corresponding entries:
\begin{align*}
\v{\Phi_x[\tau]} =&\ S_x[\tau] \v{\Phi_x[\tau]} \quad \Leftrightarrow\\
\mat{
\Phi_x^{11}[\tau] \\
\Phi_x^{12}[\tau] \\
\Phi_x^{21}[\tau] \\
\Phi_x^{22}[\tau]
}
=&\ \mat{
1 & 0 & 0 & 0 \\
0 & 0 & 0 & 0 \\
0 & 0 & 1 & 0 \\
0 & 0 & 0 & 1
}
\mat{
\Phi_x^{11}[\tau] \\
\Phi_x^{12}[\tau] \\
\Phi_x^{21}[\tau] \\
\Phi_x^{22}[\tau]
}
=
\mat{
\Phi_x^{11}[\tau] \\
0 \\
\Phi_x^{21}[\tau] \\
\Phi_x^{22}[\tau]
}.
\end{align*}
\end{example}

Entrywise linear constraints are more general than sparsity or locality constraints in that entries can exist beyond the diagonal or binary values.

To incorporate entrywise linear constraints
into the DP algorithm, we consider the following vectorized version of~\eqn{equivalent-SLS}:
\OptMin{
\sum\limits_{\tau = 1}^{T} g_{\tau}(x[\tau],u[\tau])
}{
\OptCons{
x[\tau + 1] = \w{A} x[\tau] + \w{B} u[\tau],
}{\forall \tau = 1, \dots, T-1,}{}\\
\OptCons{
x[1] = \v{I},
}{}{}\\
\OptCons{
\w{A} x[T] + \w{B} u[T] = 0,
}{}{}
}
where $x[\tau] = \v{\Phi_x[\tau]}$ and $u[\tau] = \v{\Phi_u[\tau]}$. $\w{A}$ and $\w{B}$ are defined such that $\w{A}x[\tau] = A \Phi_x[\tau]$ and $\w{B}u[\tau] = B \Phi_u[\tau]$.

We first incorporate the condition \eqn{constraints-Su}. By~\lem{null-space-is-subspace}, we can express:
\begin{align}
\Scal_u[\tau] =&\ \{ \Phi : \v{\Phi} \in \nul{S_u[\tau] - I} \} \nonumber \\
=&\ \{ \Phi : \v{\Phi} = \Xi_{S_u}[\tau] \hat{v} \}.
\label{eqn:null-space-Su}
\end{align}
Therefore, we can enforce~\eqn{constraints-Su} by requiring $u[\tau] = \Xi_{S_u}[\tau] v[\tau]$. Similar to the plain SLS case, we can derive $v[\tau]$, $u[\tau]$, and the cost-to-go function $V_{\tau}(x[\tau])$ by
\begin{align}
&v[\tau] = \argmin\limits_{\hat{v} \in \Acal_v[\tau]} g_{\tau}(x[\tau],\Xi_{S_u}[\tau]\hat{v}) + V_{\tau + 1}(\w{A}x[\tau]+\w{B}\Xi_{S_u}[\tau]\hat{v}) \nonumber\\
&u[\tau] = \Xi_{S_u}[\tau] v[\tau] = K_\tau(x[\tau]),
\label{eqn:vectorized-K-function} \\
&V_{\tau}(x[\tau]) = g_{\tau}(x[\tau],u[\tau]) + V_{\tau + 1}(\w{A}x[\tau]+\w{B}u[\tau]),
\label{eqn:vectorized-cost-to-go}
\end{align}
for all $\tau = T-1, \dots, 1$ and $V_{T+1}(x[\tau]) = 0$.

Again, let $x[T+1] = 0 \in \nul{I}$ to enforce the boundary condition, we derive the admissible set $\Acal_v[\tau]$ and some condition for $x[\tau]$ from a corollary to \thm{feasible_set} below.

\begin{corollary}\label{cor:feasible_set}
Suppose $x[\tau+1] \in \nul{\Psi_x[\tau+1]}$, we have
\begin{align*}
\Acal_v[\tau] = \{ \hat{v} : \hat{v} = Q_x x[\tau] + Q_\lambda \lambda \}
\end{align*}
where
\begin{align*}
\Gamma[\tau] =&\ \mat{-\w{B}\Xi_{S_u}[\tau] & \Xi_x[\tau + 1]}, \\
Q_x =&\ \mat{I & 0} \Gamma^+[\tau] \w{A}, \\
Q_\lambda =&\ \mat{I & 0} (I - \Gamma^+[\tau] \Gamma[\tau]),
\end{align*}
and $\Xi_x[\tau+1]$ is given by \lem{null-space-is-subspace} for $\nul{\Psi_x[\tau+1]}$.

Also, $x[\tau] \in \nul{\Omega[\tau]}$ where
\begin{align*}
\Omega[\tau] = (\Gamma[\tau] \Gamma^+[\tau] - I)\w{A}.
\end{align*}
\end{corollary}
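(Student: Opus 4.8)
The plan is to mirror the argument of \thm{feasible_set} essentially verbatim, working in the vectorized coordinates and absorbing the actuator constraint \eqn{constraints-Su} into an effective input map. The key observation is that substituting $u[\tau] = \Xi_{S_u}[\tau] v[\tau]$ into the vectorized dynamic $x[\tau+1] = \w{A}x[\tau] + \w{B}u[\tau]$ gives $x[\tau+1] = \w{A}x[\tau] + \w{B}\Xi_{S_u}[\tau] v[\tau]$, which is exactly the plain-SLS recursion of \thm{feasible_set} under the identifications $X[\tau] \mapsto x[\tau]$, $A \mapsto \w{A}$, $B \mapsto \w{B}\Xi_{S_u}[\tau]$, and free control $U[\tau] \mapsto v[\tau]$. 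Under these replacements the quantities $\Gamma[\tau]$, $Q_x$, $Q_\lambda$, and $\Omega[\tau]$ stated in the corollary are precisely the images of $\Gamma[\tau]$, $Q_X$, $Q_\Lambda$, and $\Psi_x[\tau]$ from the theorem, so proving the corollary amounts to re-running the theorem's proof in the new variables.

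Concretely, I would first invoke \lem{null-space-is-subspace} on $x[\tau+1] \in \nul{\Psi_x[\tau+1]}$ to obtain a matrix $\Xi_x[\tau+1]$ and a vector $\theta$ with $x[\tau+1] = \Xi_x[\tau+1]\theta$. Equating this with the substituted dynamic and rearranging yields
\begin{align*}
\mat{-\w{B}\Xi_{S_u}[\tau] & \Xi_x[\tau+1]} \mat{v[\tau] \\ \theta} = \Gamma[\tau] \mat{v[\tau] \\ \theta} = \w{A}x[\tau].
\end{align*}
Next I would apply the generalized-inverse solvability result \cite[Theorem 1]{james1978generalised} to write the stacked unknown as $\mat{v[\tau] \\ \theta} = \Gamma^+[\tau]\w{A}x[\tau] + (I - \Gamma^+[\tau]\Gamma[\tau])\lambda$ for an arbitrary $\lambda$, and extract the top block with $\mat{I & 0}$ to recover $v[\tau] = Q_x x[\tau] + Q_\lambda \lambda$, which is $\Acal_v[\tau]$. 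Finally, the consistency condition from the same theorem, $\Gamma[\tau]\Gamma^+[\tau]\w{A}x[\tau] = \w{A}x[\tau]$, forces $(\Gamma[\tau]\Gamma^+[\tau] - I)\w{A}x[\tau] = 0$, which is exactly $x[\tau] \in \nul{\Omega[\tau]}$.

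Because every step transcribes the proof of \thm{feasible_set}, I expect no genuine obstacle; the one point requiring care is the bookkeeping of the substitution. Specifically, I must confirm that $\Xi_{S_u}[\tau]$ enters only through the effective input matrix $\w{B}\Xi_{S_u}[\tau]$, and that the variable being solved for is the reduced coordinate $v[\tau]$ rather than $u[\tau]$ itself, so the block $\mat{I & 0}$ selects a $v[\tau]$ of the correct dimension (after which $u[\tau] = \Xi_{S_u}[\tau] v[\tau]$ recovers a control that automatically lies in $\Scal_u[\tau]$). Note that the corollary only folds in the actuator constraint \eqn{constraints-Su}; the state constraint \eqn{constraints-Sx} is not treated here and would be imposed separately by intersecting $\nul{\Omega[\tau]}$ with $\nul{S_x[\tau] - I}$ via \lem{union-of-null-space}.
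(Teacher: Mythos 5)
Your proposal is correct and matches the paper's intent exactly: the paper omits the proof of \cor{feasible_set} precisely because it is the proof of \thm{feasible_set} re-run under the substitutions $A \mapsto \w{A}$, $B \mapsto \w{B}\Xi_{S_u}[\tau]$, $X \mapsto x$, $U \mapsto v$, which is what you carry out. Your closing remark that the corollary folds in only \eqn{constraints-Su} while \eqn{constraints-Sx} is imposed afterwards via \lem{union-of-null-space} is also consistent with how the paper proceeds in \eqn{vectorized-null-Psi}.
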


We omit the proof of \cor{feasible_set}, which is the same as the proof of~\thm{feasible_set}.

We also need to enforce the condition \eqn{constraints-Sx}. We know
\begin{align*}
x[\tau] \in \Scal_x[\tau] =&\ \{ \Phi : \v{\Phi} \in \nul{S_x[\tau] - I} \}.
\end{align*}
Meanwhile, \cor{feasible_set} suggests that $x[\tau] \in \nul{\Omega[\tau]}$. So by \lem{union-of-null-space}, we have $x[\tau] \in \nul{\Psi_x[\tau]}$ where
\begin{align}
\nul{\Psi_x[\tau]} = \nul{\mat{S_x[\tau] - I \\ \Omega[\tau]}}.
\label{eqn:vectorized-null-Psi}
\end{align}

Together, we summarize the derivation in \alg{DP-SLC}.

\begin{algorithm}[!t]
\begin{algorithmic}[1]
\REQUIRE $A, B$, objective $g(\Phibf_\xbf,\Phibf_\ubf)$, and entrywise linear constraints $\Scal$.
\ENSURE $\Phi_x[\tau], \Phi_u[\tau]$ for all $\tau = 1, \dots, T$.
\STATE{$\nul{\Psi_x[T+1]} = \nul{I}$.}
\STATE{$V_{T+1}(x[\tau]) = 0$.}
\FOR{$\tau = T, \dots, 1$}
\STATE{Derive $\Xi_{S_u}[\tau]$ from \eqn{null-space-Su}.}
\STATE{Derive $\Acal_v[\tau]$ and $\nul{\Omega[\tau]}$ from \cor{feasible_set}.}
\STATE{Compute $K_\tau(x[\tau])$ by \eqn{vectorized-K-function}.}
\STATE{Derive $V_{\tau}(x[\tau])$ from \eqn{vectorized-cost-to-go}.}
\STATE{Obtain $\nul{\Psi_x[\tau]}$ by \eqn{vectorized-null-Psi}.}
\ENDFOR
\STATE{$\Phi_x[1] = I$.}
\FOR{$\tau = 1, \dots, T$}
\STATE{$\v{\Phi_u[\tau]} = K_\tau(\v{\Phi_x[\tau]})$.}
\STATE{$\Phi_x[\tau + 1] = A \Phi_x[\tau] + B\Phi_u[\tau]$.}
\ENDFOR
\end{algorithmic}
\caption{DP for SLS with entrywise linear constraints.}
\label{alg:DP-SLC}
\end{algorithm}

\section{Case Study: $\Hcal_2$ Objective}\label{sec:H2}

In \sec{DP}, we derived DP algorithms for plain SLS, an approximation to infinite horizon SLS, and SLS with sparsity constraints. Here, we apply the algorithms to a specific objective function, the $\Hcal_2$ objective, as an example. The $\Hcal_2$ objective function is given by
\begin{align*}
g(\Phibf_\xbf,\Phibf_\ubf) = \norm{C \Phibf_\xbf +D \Phibf_\ubf}_{\Hcal_2}^2 = \sum\limits_{\tau = 1}^{T} g_{\tau}(\Phi_x[\tau],\Phi_u[\tau])
\end{align*}
where
\begin{align*}
g_{\tau}(\Phi_x[\tau],\Phi_u[\tau]) =  \norm{C \Phi_x[\tau] +D \Phi_u[\tau]}_F^2
\end{align*}
for some matrices $C$ and $D$.

\renewcommand{\paragraph}[1]{\vspace{0.2cm}\textbf{#1}}

\paragraph{Plain SLS:}
We apply \alg{DP-plain} to plain SLS with $\Hcal_2$ objective. The first step is to derive an explicit cost-to-go function $V_{\tau}(X[\tau])$, and we proceed by mathematical induction. We claim
\begin{align}
V_\tau(X[\tau]) = \sum\limits_{t =\tau}^{T} \norm{ P_t[\tau] X[\tau] }_F^2
\label{eqn:H2-cost-to-go}
\end{align}
where $P_t[\tau]$ are some matrices.

For $\tau = T+1$, $V_{T+1}(X[\tau]) = 0$ satisfies the claim.

For $\tau < T+1$, \eqn{cost-to-go} gives
\begin{align*}
&\ V_{\tau}(X[\tau]) = \min\limits_{\hat{U} \in \Acal_U[\tau]} \left\lbrace
\norm{ CX[\tau] + D \hat{U} }_F^2 \vphantom{\sum\limits_{t =\tau + 1}^{T}} \right. \\
&\quad\quad + \left. \sum\limits_{t =\tau + 1}^{T} \norm{ P_t[\tau+1] (AX[\tau] + B \hat{U}) }_F^2 \right\rbrace.
\end{align*}
By \thm{feasible_set}, we have
\begin{align}
&\ V_{\tau}(X[\tau]) = \min\limits_{\Lambda} \left\lbrace
\norm{ C_X X[\tau] + D_\Lambda \Lambda }_F^2 \vphantom{\sum\limits_{t =\tau + 1}^{T}} \right. \nonumber \\
&\quad\quad + \left. \sum\limits_{t =\tau + 1}^{T} \norm{ P_t[\tau+1] (A_X X[\tau] + B_\Lambda \Lambda) }_F^2 \right\rbrace
\label{eqn:H2-cost-to-go-before-optimization}
\end{align}
where $A_X = A + B Q_X$, $B_\Lambda = B Q_\Lambda$, $C_X = C + D Q_X$, and $D_\Lambda = D Q_\Lambda$.

We can find the minimizer $\Lambda[\tau]$ to \eqn{H2-cost-to-go-before-optimization} by the first-order condition of its derivative:
\begin{align*}
&\ D_\Lambda^\top (C_X X[\tau] + D_\Lambda \Lambda[\tau]) \\
&\ + B_\Lambda^\top \sum\limits_{t =\tau + 1}^{T} P_t[\tau+1]^\top P_t[\tau+1] (A_X X[\tau] + B_\Lambda \Lambda[\tau]) = O
\end{align*}
where $O$ is the all-zero matrix.

By defining
\begin{align}
P[\tau] =&\ \sum\limits_{t = \tau + 1}^{T} P_t[\tau + 1]^\top P_t[\tau + 1], \label{eqn:H2-P}\\
L[\tau] =&\ - (D_\Lambda^\top D_\Lambda + B_\Lambda^\top P[\tau] B_\Lambda)^{-1} (D_\Lambda^\top C_X + B_\Lambda^\top P[\tau] A_X),\nonumber
\end{align}
we can derive
\begin{align}
\Lambda[\tau] =&\ L[\tau] X[\tau] \nonumber \\
U[\tau] =&\ Q_X X[\tau] + Q_\Lambda \Lambda[\tau] = (Q_X + Q_\Lambda L[\tau]) X[\tau] \nonumber \\
=&\ K_\tau(X[\tau]) = K[\tau]X[\tau]
\label{eqn:H2-K-function}
\end{align}
where we introduce the matrix $K[\tau]$ accordingly.

Consequently, the cost-to-go function is
\begin{align}
&\ V_{\tau}(X[\tau]) =
\norm{ (C + D K[\tau]) X[\tau] }_F^2 \nonumber \\
&\quad\quad + \sum\limits_{t =\tau + 1}^{T} \norm{ P_t[\tau+1] (A + B K[\tau])X[\tau] }_F^2,
\label{eqn:H2-derived-cost-to-go}
\end{align}
of which the form matches our claim \eqn{H2-cost-to-go}.

Although we need the cost-to-go function for derivation, we don't need it when deriving $U[\tau]$. As such, we only need to keep track of the quantity $P[\tau]$. Compare \eqn{H2-cost-to-go} and \eqn{H2-derived-cost-to-go}, we can update \eqn{H2-P} by
\begin{align}
P[\tau - 1] =&\ (C + D K[\tau])^\top (C + D K[\tau]) \nonumber \\
&\ + (A + B K[\tau])^\top P[\tau] (A + B K[\tau]).
\label{eqn:H2-update-P}
\end{align}

In sum, we derive \alg{DP-H2} for plain SLS with $\Hcal_2$ objective.

\begin{algorithm}[!t]
\begin{algorithmic}[1]
\REQUIRE $A, B, C, D$.
\ENSURE $\Phi_x[\tau], \Phi_u[\tau]$ for all $\tau = 1, \dots, T$.
\STATE{$\nul{\Psi_x[T+1]} = \nul{I}$.}
\STATE{$P = 0$.}
\FOR{$\tau = T, \dots, 1$}
\STATE{Derive $\Acal_U[\tau]$ and $\nul{\Psi_x[\tau]}$ from \thm{feasible_set}.}
\STATE{Compute $K[\tau]$ by \eqn{H2-K-function}.}
\STATE{Update $P$ by \eqn{H2-update-P}.}
\ENDFOR
\STATE{$\Phi_x[1] = I$.}
\FOR{$\tau = 1, \dots, T$}
\STATE{$\Phi_u[\tau] = K[\tau]\Phi_x[\tau]$.}
\STATE{$\Phi_x[\tau + 1] = A \Phi_x[\tau] + B\Phi_u[\tau]$.}
\ENDFOR
\end{algorithmic}
\caption{DP for plain SLS with $\Hcal_2$ objective.}
\label{alg:DP-H2}
\end{algorithm}

\paragraph{Approximation to Infinite Horizon SLS:}
We then extend the DP approximation to infinite horizon SLS to $\Hcal_2$ objective. As discussed in \sec{DP-infinite-horizon-SLS}, we just need to relax the feasible set $\Acal_U[\tau]$ to be the whole space, and the cost-to-go function becomes
\begin{align*}
V_{\tau}(X[\tau]) =&\ \min\limits_{\hat{U}} \left\lbrace
\norm{ CX[\tau] + D \hat{U} }_F^2 \vphantom{\sum\limits_{t =\tau + 1}^{T}} \right. \\
&\ \left. + \sum\limits_{t =\tau + 1}^{T} \norm{ P_t[\tau+1] (AX[\tau] + B \hat{U}) }_F^2 \right\rbrace.
\end{align*}
Similarly, we obtain the first-order condition of the derivative:
\begin{align*}
D^\top (C X[\tau] + D U[\tau]) + B^\top P[\tau] (A X[\tau] + B U[\tau]) = O
\end{align*}
where $P[\tau]$ is defined in \eqn{H2-P}. Hence,
\begin{align}
K[\tau] =&\ -(D^\top D + B^\top P[\tau] B)^{-1} (D^\top C + B^\top P[\tau] A), \label{eqn:H2-Approx-K-function}\\
U[\tau] =&\ K[\tau] X[\tau] = K_{\tau}(X[\tau]) \nonumber
\end{align}
and the corresponding DP Approx algorithm is summarized in \alg{DP-Approx-H2}.

\begin{algorithm}[!t]
\begin{algorithmic}[1]
\REQUIRE $A, B, C, D$.
\ENSURE $\Phi_x[\tau], \Phi_u[\tau]$ for all $\tau = 1, \dots, T$.
\STATE{$P = 0$.}
\FOR{$\tau = T, \dots, 1$}
\STATE{Compute $K[\tau]$ by \eqn{H2-Approx-K-function}.}
\STATE{Update $P$ by \eqn{H2-update-P}.}
\ENDFOR
\STATE{$\Phi_x[1] = I$.}
\FOR{$\tau = 1, \dots, T$}
\STATE{$\Phi_u[\tau] = K[\tau]\Phi_x[\tau]$.}
\STATE{$\Phi_x[\tau + 1] = A \Phi_x[\tau] + B\Phi_u[\tau]$.}
\ENDFOR
\end{algorithmic}
\caption{DP Approx: DP approximation for infinite horizon SLS with $\Hcal_2$ objective.}
\label{alg:DP-Approx-H2}
\end{algorithm}

\paragraph{SLS with Entrywise Linear Constriants:}
Finally, we specialize \alg{DP-SLC} for $\Hcal_2$ objective.
We first introduce $\w{C}$ and $\w{D}$ such that
\begin{align*}
C \Phi_x[\tau] = \w{C} \v{\Phi_x[\tau]}, \quad\text{and}\quad
D \Phi_u[\tau] = \w{D} \v{\Phi_u[\tau]},
\end{align*}
for all $\tau$. Likewise, we assume that
\begin{align*}
V_\tau(x[\tau]) = \sum\limits_{t =\tau}^{T} \norm{ P_t[\tau] x[\tau] }_F^2.
\end{align*}

Following the similar procedure, we obtain $\Xi_{S_u}[\tau]$ from \eqn{null-space-Su} and compute
\begin{align*}
L_n[\tau] =&\ D_\lambda[\tau]^\top C_x[\tau] + B_\lambda[\tau]^\top P[\tau] A_x[\tau],\\
L_d[\tau] =&\ D_\lambda[\tau]^\top D_\lambda[\tau] + B_\lambda[\tau]^\top P[\tau] B_\lambda[\tau],\\
L[\tau] =&\ - L_d[\tau]^{-1} L_n[\tau],
\end{align*}
where $P[\tau]$ is defined in \eqn{H2-P} and
\begin{align*}
A_x[\tau] =&\ \w{A} + \w{B} \Xi_{S_u}[\tau] Q_x, & B_\lambda[\tau] =&\ \w{B} \Xi_{S_u}[\tau] Q_\lambda, \\
C_x[\tau] =&\ \w{C} + \w{D} \Xi_{S_u}[\tau] Q_x, & D_\lambda[\tau] =&\ \w{D} \Xi_{S_u}[\tau] Q_\lambda.
\end{align*}
Accordingly,
\begin{align}
\lambda[\tau] =&\ L[\tau] x[\tau], \nonumber \\
v[\tau] =&\ Q_x[\tau] x[\tau] + Q_\lambda[\tau] \lambda[\tau] = (Q_x[\tau] + Q_\lambda[\tau] L[\tau]) x[\tau], \nonumber\\
u[\tau] =&\ \Xi_{S_u}[\tau] v[\tau] = \Xi_{S_u}[\tau] (Q_x[\tau] + Q_\lambda[\tau] L[\tau]) x[\tau] \nonumber\\
=&\ K_\tau(x[\tau]) = K[\tau]x[\tau]
\label{eqn:H2-SLC-K-function}
\end{align}
with $K[\tau]$ defined correspondingly.

As a result,
\begin{align*}
&\ V_{\tau}(x[\tau]) =
\norm{ (\w{C} + \w{D} K[\tau]) x[\tau] }_F^2 \nonumber \\
&\quad\quad + \sum\limits_{t =\tau + 1}^{T} \norm{ P_t[\tau+1] (\w{A} + \w{B} K[\tau])x[\tau] }_F^2,
\end{align*}
which confirms our assumption above. Therefore, we can update $P[\tau]$ by
\begin{align}
P[\tau - 1] =&\ (\w{C} + \w{D} K[\tau])^\top (\w{C} + \w{D} K[\tau]) \nonumber \\
&\ + (\w{A} + \w{B} K[\tau])^\top P[\tau] (\w{A} + \w{B} K[\tau])
\label{eqn:H2-SLC-update-P}
\end{align}
and we summarize the derivation in \alg{DP-SLC-H2}.

\begin{algorithm}[!t]
\begin{algorithmic}[1]
\REQUIRE $A, B, C, D$ and $S_x[\tau], S_u[\tau]$
\ENSURE $\Phi_x[\tau], \Phi_u[\tau]$ for all $\tau = 1, \dots, T$.
\STATE{$\nul{\Psi_x[T+1]} = \nul{I}$.}
\STATE{$P = 0$.}
\FOR{$\tau = T, \dots, 1$}
\STATE{Derive $\Xi_{S_u}[\tau]$ from \eqn{null-space-Su}.}
\STATE{Derive $\Acal_v[\tau]$ and $\nul{\Omega[\tau]}$ from \cor{feasible_set}.}
\STATE{Compute $K[\tau]$ by \eqn{H2-SLC-K-function}.}
\STATE{Update $P$ by \eqn{H2-SLC-update-P}.}
\STATE{Obtain $\nul{\Psi_x[\tau]}$ by \eqn{vectorized-null-Psi}.}
\ENDFOR
\STATE{$\Phi_x[1] = I$.}
\FOR{$\tau = 1, \dots, T$}
\STATE{$\v{\Phi_u[\tau]} = K[\tau]\v{\Phi_x[\tau]}$.}
\STATE{$\Phi_x[\tau + 1] = A \Phi_x[\tau] + B\Phi_u[\tau]$.}
\ENDFOR
\end{algorithmic}
\caption{DP for SLS with $\Hcal_2$ objective subject to entrywise linear constraints.}
\label{alg:DP-SLC-H2}
\end{algorithm}

\section{Evaluation}\label{sec:evaluation}

We evaluate our algorithms through simulations. We first compare the scalability of DP against existing solver CVX \cite{CVX} and naive Lagrange multiplier method
, which also yields the analytical solution. We then simulate DP (\alg{DP-H2}) and DP Approx (\alg{DP-Approx-H2}) to evaluate the cost, in terms of computation overhead, of obtaining a finite impulse response (FIR) system under feedback. We begin with our simulation setup and a brief introduction of the Lagrange multiplier method.

\subsection{Simulation Setup and Naive Lagrange Multiplier Method}
We conduct the simulations using SLSpy \cite{tsengsubslspy,SLSpy}. In each simulation, we synthesize controllers for $100$ random systems and collect the statistical data. Each system is a fully actuated chain with $N_x$ nodes, where the $A$ matrix is tridiagonal with randomly generated off-diagonal entries, and $B$ matrix is a diagonal matrix with random diagonal entries. The SLS objective is as follows
\begin{align}
g(\Phibf_\xbf, \Phibf_\ubf)
= \left\lVert \mat{I \\ 0} \Phibf_\xbf + \mat{0 \\ I} \Phibf_\ubf \right\rVert_{\Hcal_2}^2
= \left\lVert \mat{\Phibf_\xbf \\ \Phibf_\ubf} \right\rVert_{\Hcal_2}^2,
\label{eqn:simulation-objective}
\end{align}
where $C$ and $D$ matrices are defined accordingly. We consider the d-locality constraint as in \cite{wang2016localized} with actuation delay $1$, communication delay $2$ and $d = 3$. As a subclass of the sparsity constraints, we can also express locality constraints as entrywise linear constraints. The results are measured on a desktop with AMD Ryzen 7 3700X processor ($16$ logical cores) and $32$ GB DDR4 memory.

Since the SLS constraints \eqn{sf-space-time-constraints} and the locality constraints (in the form of \eqn{entrywise-linear-constraint}) are all equalities, we can rewrite the SLS problem as
\begin{align*}
\min\ g(\Phibf_\xbf,\Phibf_\ubf) \ \ \st\ h(\Phibf_\xbf,\Phibf_\ubf) = 0
\end{align*}
and apply the naive Lagrange multiplier method to solve
\begin{align*}
\nabla_{\Phibf_\xbf, \Phibf_\ubf, \lambda}\ g(\Phibf_\xbf,\Phibf_\ubf) - \lambda h(\Phibf_\xbf,\Phibf_\ubf) = 0.
\end{align*}
Given the objective \eqn{simulation-objective}, we express the above condition as
\begin{align*}
J \v{\Phi} - b = 0
\end{align*}
for some matrix $J$ and vector $b$, where $\v{\Phi}$ is a vector of the entries in $\Phibf_\xbf$ and $\Phibf_\ubf$, and compute $\v{\Phi}$ by $J^{-1} b$.

\subsection{Scalability with System Size}
To evaluate the scalability of the methods, we run the simulations with different system size $N_x$, measure the average synthesis time for the plain SLS and SLS with locality constraints, and summarize the results in \tab{scalability-H2} and \tab{scalability-H2-localized}, respectively. Among the methods, DP scales the best. For plain SLS, DP is $12$ to $22 \times$ faster than CVX and $10$ to $4000\times$ faster than naive Lagrange multiplier method; With locality constraints, DP is $4$ to $17\times$ faster than CVX and more than $38\times$ faster than naive Lagrange multiplier method, which cannot even deal with $N_x = 20$. We remark that DP outperformed two other methods using only one CPU core without any optimization, while CVX parallelized its work over $16$ logical cores. It is possible to improve the performance of DP by parallelizing its computation.

\begin{table}
\centering
\caption{Average synthesis time of plain SLS for random chain-like systems.}
\label{tab:scalability-H2}
\renewcommand{\arraystretch}{1.25}
\begin{tabular}{|
@{}>{\centering}m{0.08\columnwidth}@{}||
@{}>{\centering}m{0.3\columnwidth}@{}|
@{}>{\centering}m{0.3\columnwidth}@{}|
@{}>{\centering}m{0.3\columnwidth}@{}|}
\hline
\multirow{3}{*}{$N_x$} & \multicolumn{3}{c|}{Synthesis Time (ms)} \tabularnewline
\cline{2-4}
& DP & \multirow{2}{*}{CVX} & Lagrange \tabularnewline
& (\alg{DP-H2}) & & Multiplier \tabularnewline
\hline
\hline
$5$ & $5.11$ & $72.44$ & $54.86$\tabularnewline \hline
$10$ & $6.60$ & $90.55$ & $1176.51$\tabularnewline \hline
$15$ & $8.42$ & $136.76$ & $9586.75$\tabularnewline \hline
$20$ & $11.25$ & $244.37$ & $45782.15$\tabularnewline \hline

\end{tabular}
\smallskip
\caption{Average synthesis time of SLS with locality constraints for random chain-like systems.}
\label{tab:scalability-H2-localized}
\renewcommand{\arraystretch}{1.25}
\begin{tabular}{|
@{}>{\centering}m{0.08\columnwidth}@{}||
@{}>{\centering}m{0.3\columnwidth}@{}|
@{}>{\centering}m{0.3\columnwidth}@{}|
@{}>{\centering}m{0.3\columnwidth}@{}|}
\hline
\multirow{3}{*}{$N_x$} & \multicolumn{3}{c|}{Synthesis Time (ms)} \tabularnewline
\cline{2-4}
& DP & \multirow{2}{*}{CVX} & Lagrange \tabularnewline
& (\alg{DP-SLC-H2}) & & Multiplier \tabularnewline
\hline
\hline
$5$ & $12.35$ & $217.81$ & $479.60$\tabularnewline \hline
$10$ & $129.48$ & $1411.00$ & $78405.76$\tabularnewline \hline
$15$ & $685.03$ & $4890.28$ & $1013700.11$\tabularnewline \hline
$20$ & $1968.85$ & $8549.75$ & not feasible \tabularnewline \hline

\end{tabular}
\end{table}

\subsection{Cost for FIR System}
The boundary constraint is essential for the synthesized controller to stabilize a system in a finite horizon (FIR system). When the desired horizon goes to infinity, the controllers subject to the boundary condition become the ones without. Below, we examine the computation overhead for an FIR system and evaluate how close the DP controller (by \alg{DP-H2}) is to the DP Approx controller (by \alg{DP-Approx-H2}), which is an approximation to infinite horizon SLS.

\begin{figure}
\centering
\includegraphics[scale=1]{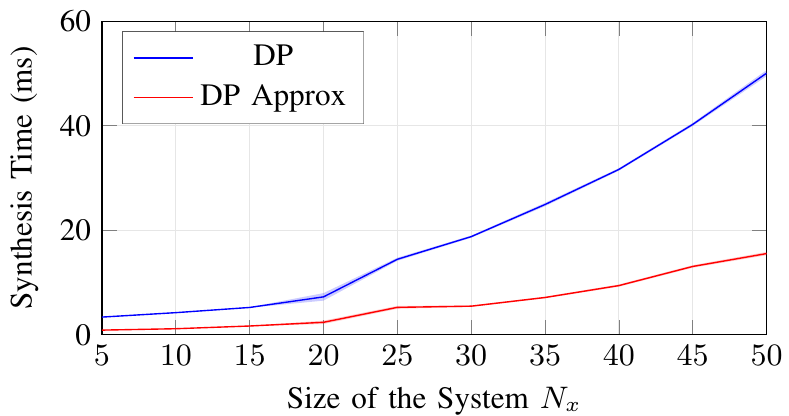}
\includegraphics[scale=1]{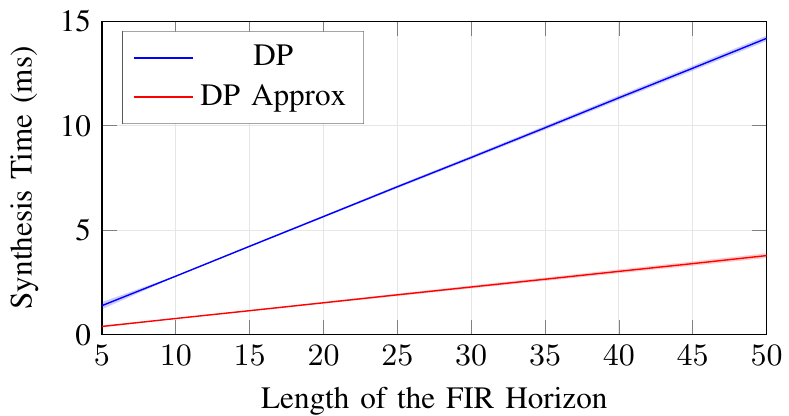}
\caption{Synthesis time of the controller by DP (\alg{DP-H2}) and the infinite horizon approximation controller by DP Approx (\alg{DP-Approx-H2}).}
\label{fig:exp2-overhead}
\end{figure}

\fig{exp2-overhead} shows the computation overhead in terms of synthesis time versus the system size ($N_x$) and the horizon of the synthesized controllers. DP Approx is about $3$ to $4\times$ faster than DP, and both of them scales linearly with the FIR horizon as expected. In exchange, \fig{exp2-horizon} shows that the DP Approx controller fails to stabilize the system within the desired FIR horizon after an impulse noise hits the center of the system. When we allow a longer FIR horizon, DP Approx controller performs the same as the DP controller. In sum, we pay some tens of milliseconds more to guarantee a controller stabilizing a system in a finite horizon.

\begin{figure}
\def\figlmargin{25.01pt}
\def\figrmargin{5.59pt}
\centering
\centerlegend{\figlmargin}{heatmap-colorbar}{\figrmargin}
\vspace*{0.25\baselineskip}
\centeritem{\figlmargin}{\small FIR horizon $= 5$.}{\figrmargin}
\includegraphics{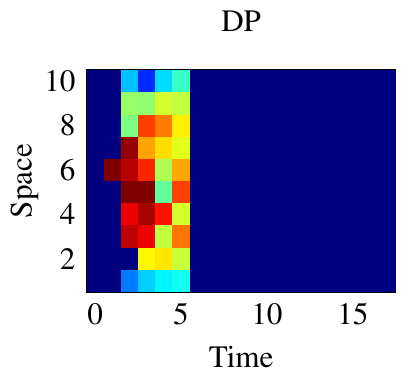}
\hfill
\includegraphics{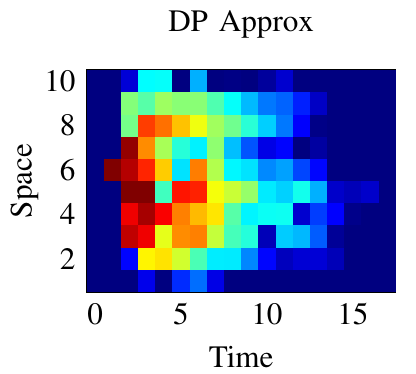}
\\[0.25\baselineskip]
\centeritem{\figlmargin}{\small FIR horizon $= 10$.}{\figrmargin}
\includegraphics{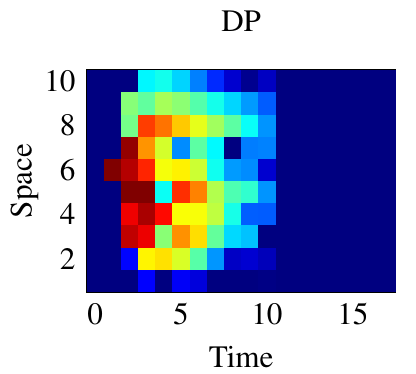}
\hfill
\includegraphics{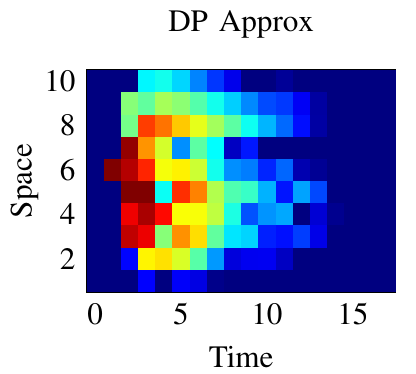}
\\[0.25\baselineskip]
\centeritem{\figlmargin}{\small FIR horizon $= 15$.}{\figrmargin}
\includegraphics{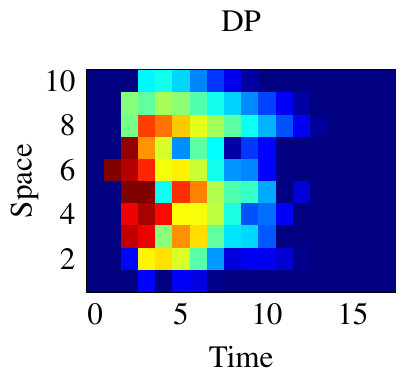}
\hfill
\includegraphics{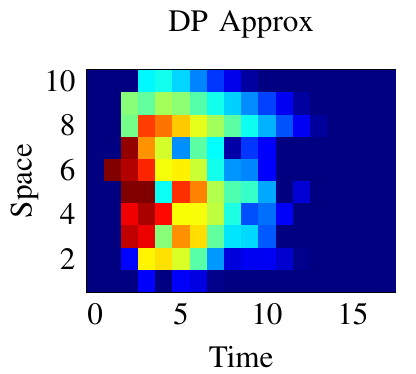}
\caption{Comparison of controllers computed with and without boundary constraints for different time horizons. On the left, the controllers are computed using DP, i.e., accounting for boundary constraints. On the right, the controllers are computed using DP Approx, i.e., no boundary constraints are imposed. Each row corresponds to a different FIR horizon. The figures show the space-time diagram of the state (in log scale, $\log_{10}(|x|)$) after some noise hits the center of the chain at time $0$. The DP Approx controller approximates finite horizon DP controller when the horizon gets longer.}
\label{fig:exp2-horizon}
\end{figure}

\section{Conclusion and Future Directions}\label{sec:conclusion}

We derived DP algorithms to solve general state-feedback SLS problems, including plain SLS, infinite horizon approximation, and sparsity constrained SLS. Sparsity constraints generalize locality constraints by allowing linear dependency among entries of spectral components. Using $\Hcal_2$ objective as an example, we demonstrate how to adapt DP algorithms to a given objective. Our simulation results suggest that DP significantly outperforms CVX and naive Lagrange multiplier method. We also quantify the computation overhead of obtaining a controller for an FIR system after feedback.

Future work includes extensions of the DP algorithms for output-feedback SLS, which contains more parameters to handle. Also, it is worth covering more constraint classes, such as inequality constraints or dependencies among entries from different spectral components. Finally, one can apply DP to online synthesis settings such as model predictive control, where the computational overhead is crucial.

\bibliographystyle{IEEEtran}
\bibliography{Test}

\begin{thebibliography}{10}
\providecommand{\url}[1]{#1}
\csname url@samestyle\endcsname
\providecommand{\newblock}{\relax}
\providecommand{\bibinfo}[2]{#2}
\providecommand{\BIBentrySTDinterwordspacing}{\spaceskip=0pt\relax}
\providecommand{\BIBentryALTinterwordstretchfactor}{4}
\providecommand{\BIBentryALTinterwordspacing}{\spaceskip=\fontdimen2\font plus
\BIBentryALTinterwordstretchfactor\fontdimen3\font minus
  \fontdimen4\font\relax}
\providecommand{\BIBforeignlanguage}[2]{{%
\expandafter\ifx\csname l@#1\endcsname\relax
\typeout{** WARNING: IEEEtran.bst: No hyphenation pattern has been}%
\typeout{** loaded for the language `#1'. Using the pattern for}%
\typeout{** the default language instead.}%
\else
\language=\csname l@#1\endcsname
\fi
#2}}
\providecommand{\BIBdecl}{\relax}
\BIBdecl

\bibitem{anderson2019system}
J.~Anderson \emph{et~al.}, ``System level synthesis,'' \emph{Annual Reviews in
  Control}, vol.~59, no.~12, pp. 3238--3251, 2019.

\bibitem{wang2019system}
Y.-S. Wang, N.~Matni, and J.~C. Doyle, ``A system level approach to controller
  synthesis,'' \emph{{IEEE} Trans. Autom. Control}, vol.~34, no.~8, pp.
  982--987, 2019.

\bibitem{anderson2017structured}
J.~Anderson and N.~Matni, ``Structured state space realizations for {SLS}
  distributed controllers,'' in \emph{Proc. Allerton}, 2017, pp. 982--987.

\bibitem{matni2017scalable}
N.~Matni, Y.-S. Wang, and J.~Anderson, ``Scalable system level synthesis for
  virtually localizable systems,'' in \emph{Proc. IEEE CDC}.\hskip 1em plus
  0.5em minus 0.4em\relax IEEE, 2017, pp. 3473--3480.

\bibitem{wang2018separable}
Y.-S. Wang, N.~Matni, and J.~C. Doyle, ``Separable and localized system-level
  synthesis for large-scale systems,'' \emph{{IEEE} Trans. Autom. Control},
  vol.~63, no.~12, pp. 4234--4249, 2018.

\bibitem{alonso2019distributed}
C.~A. Alonso and N.~Matni, ``Distributed and localized model predictive control
  via system level synthesis,'' \emph{arXiv preprint arXiv:1909.10074}, 2019.

\bibitem{bertsekas2005dynamic1}
D.~P. Bertsekas, \emph{Dynamic Programming and Optimal Control}.\hskip 1em plus
  0.5em minus 0.4em\relax Athena Scientific, 2005, vol.~1.

\bibitem{james1978generalised}
M.~James, ``The generalised inverse,'' \emph{The Mathematical Gazette},
  vol.~62, no. 420, pp. 109--114, 1978.

\bibitem{CVX}
M.~Grant and S.~Boyd, ``{CVX}: Matlab software for disciplined convex
  programming,'' \url{http://cvxr.com/cvx}, Mar. 2014.

\bibitem{tsengsubslspy}
S.-H. Tseng and J.~S. Li, ``{SLSpy}: {Python}-based system-level controller
  synthesis framework,'' submitted.

\bibitem{SLSpy}
\BIBentryALTinterwordspacing
{SLSpy}. [Online]. Available: \url{{https://github.com/shih-hao-tseng/SLSpy}}
\BIBentrySTDinterwordspacing

\bibitem{wang2016localized}
Y.-S. Wang and N.~Matni, ``Localized {LQG} optimal control for large-scale
  systems,'' in \emph{Proc. {IEEE} {ACC}}, 2016.

\end{thebibliography}

\end{document}